%%%%%%%%%%%%%%%%%%%%%%%%%%%%%%%%%%%%%%%%%%%%%%%%%%%%%%%%%%%%%%%%%%%%%%%%%%%%%%%%
%2345678901234567890123456789012345678901234567890123456789012345678901234567890
%        1         2         3         4         5         6         7         8

\documentclass[letterpaper, 10 pt, conference]{ieeeconf}  % Comment this line out
                                                          % if you need a4paper
%\documentclass[a4paper, 10pt, conference]{ieeeconf}      % Use this line for a4
                                                          % paper

\IEEEoverridecommandlockouts                              % This command is only
                                                          % needed if you want to
                                                          % use the \thanks command
\overrideIEEEmargins
% See the \addtolength command later in the file to balance the column lengths
% on the last page of the document

% The following packages can be found on http:\\www.ctan.org
%\usepackage{graphics} % for pdf, bitmapped graphics files
%\usepackage{epsfig} % for postscript graphics files
%\usepackage{mathptmx} % assumes new font selection scheme installed
%\usepackage{times} % assumes new font selection scheme installed
%\usepackage{amsmath} % assumes amsmath package installed
%\usepackage{amssymb}  % assumes amsmath package installed
\usepackage{bm}
\usepackage{amsmath,amssymb,amsfonts}
\usepackage{subfigure}
\usepackage{graphicx}
\usepackage{cite}
\newtheorem{assumption}{Assumption}
\newtheorem{lemma}{Lemma}
\newtheorem{theorem}{Theorem}

\newtheorem{example}{Example}
\bibliographystyle{IEEEtran}
\usepackage[T1]{fontenc}

\title{\LARGE \bf
Non-Bayesian Social Learning with Multiview Observations
}

%\author{ \parbox{3 in}{\centering Huibert Kwakernaak*
%         \thanks{*Use the $\backslash$thanks command to put information here}\\
%         Faculty of Electrical Engineering, Mathematics and Computer Science\\
%         University of Twente\\
%         7500 AE Enschede, The Netherlands\\
%         {\tt\small h.kwakernaak@autsubmit.com}}
%         \hspace*{ 0.5 in}
%         \parbox{3 in}{ \centering Pradeep Misra**
%         \thanks{**The footnote marks may be inserted manually}\\
%        Department of Electrical Engineering \\
%         Wright State University\\
%         Dayton, OH 45435, USA\\
%         {\tt\small pmisra@cs.wright.edu}}
%}

\author{Dongyan Sui, Weichen Cao, Stefan Vlaski, Chun Guan, Siyang Leng% <-this % stops a space
\thanks{D.Y.S., W.C.C., C.G., and S.Y.L. are with Academy for Engineering and Technology, Fudan University, Shanghai 200433, China. S.Y.L. is with Research Institute of Intelligent Complex Systems, Fudan University, Shanghai 200433, China. S.V. is with Department of Electrical and Electronic Engineering, Imperial College London, UK. Corresponding e-mails: \texttt{\{chunguan, syleng\}@fudan.edu.cn}.}% <-this % stops a space
\thanks{S.Y.L. is supported by the National Natural Science Foundation of China (No. 12101133).}%
}

\begin{document}

\maketitle
\thispagestyle{empty}
\pagestyle{empty}

%%%%%%%%%%%%%%%%%%%%%%%%%%%%%%%%%%%%%%%%%%%%%%%%%%%%%%%%%%%%%%%%%%%%%%%%%%%%%%%%
\begin{abstract}
Non-Bayesian social learning enables multiple agents to conduct networked signal and information processing through observing environmental signals and information aggregating. Traditional non-Bayesian social learning models only consider single signals, limiting their applications in scenarios where multiple viewpoints of information are available. In this work, we exploit, in the information aggregation step, the independently learned results from observations taken from multiple viewpoints and propose a novel non-Bayesian social learning model for scenarios with multiview observations. We prove the convergence of the model under traditional assumptions and provide convergence conditions for the algorithm in the presence of misleading signals. Through theoretical analyses and numerical experiments, we validate the strong reliability and robustness of the proposed algorithm, showcasing its potential for real-world applications.
\end{abstract}

%%%%%%%%%%%%%%%%%%%%%%%%%%%%%%%%%%%%%%%%%%%%%%%%%%%%%%%%%%%%%%%%%%%%%%%%%%%%%%%%
\section{Introduction}
Networked signal and information processing\cite{4663899,4749425,6197748,6172233,10188495} refers to the collaborative processing of information and signals among a network of distributed agents. This approach leverages the collective capabilities of interconnected devices to perform tasks like decision-making, inference, and learning more efficiently than isolated systems. The significance of networked information processing lies in its ability to enhance performance through cooperation, offering advantages such as improved scalability, resilience, and resource efficiency. It is particularly relevant in applications like sensor networks, distributed control systems, and collaborative robotics.

Non-Bayesian social learning\cite{jadbabaie2012non,shahrampour2013exponentially,8359193,7040286,8815195,7891016,7172262,7851025,8815382,10383999,9909947,9413642,bhotto2018non,9011362,hare2020non} offers a novel framework for networked signal and information processing, enabling a distributed way for agents with limited rationality and diverse sensing capabilities to infer collectively over a network. Agents process streams of incomplete data based on the underlying true state of the world, using network communications to form beliefs about various possible hypotheses and make an estimate of the underlying true state. This collaborative mechanism, which integrates neighbors' insights with fresh individual data, fosters a scalable method of learning without prior knowledge of network structure or historical data.

Various social learning models have been proposed, including aggregation methods such as linear averages\cite{jadbabaie2012non,shahrampour2013exponentially}, geometric averages\cite{8359193,7040286}, and the minimum operator\cite{8815195}. These rules can be applied to different network structures, including undirected/directed, time-varying\cite{7891016,7172262}, weakly-connected graphs\cite{7851025}, and higher-order topology\cite{chen2023distributed}, as well as to agents with growing self-confidence\cite{8815382} and heterogeneous stubbornness parameters\cite{10383999}, disparate hypothesis\cite{9909947}, under inferential attacks\cite{9413642}, and in adversarial conditions\cite{bhotto2018non}. Research has also explored learning under uncertain likelihood models and performance against malicious agents\cite{9011362,hare2020non}. All of these models offer theoretical assurances that, over time, agents can collectively learn the underlying true state of the world.

In previous non-Bayesian social learning models, the agents receive signals from the environment and cooperatively infer the underlying state based on their \textit{a priori} knowledge of the signals. However, in practice, the group could perceive different features of the environment from various perspectives. For example, individuals could judge the species of trees based on the characteristics of both leaves and trunks; customers always infer the quality of a target product by observing the quality of other products from the same brand. Traditional methods cope with such situations by integrating these multiple viewpoints into one single signal, making it challenging to determine the likelihood functions of agents with the integrated signal due to the requirement of substantial data to assess the independence or correlation among multiple signals.

In this work, we propose a novel non-Bayesian social learning algorithm based on multiview observations. Our algorithm allows the group to learn independently from multiple signals and achieves interaction among multiview observations during the information aggregation process. Similarly to previous methods, we prove the correct convergence of our proposed algorithm under traditional assumptions. Additionally, we provide convergence conditions based on the presence of misleading signals. Numerical experiments validate the effectiveness of our theoretical analysis, and we showcase the robust fault-tolerance capability of our proposed algorithm in the task of multi-agent collaborative localization.

The remaining part of this paper is organized as follows: Section II provides a full description of the problem settings and introduces our learning strategies. Section III presents sufficient assumptions/lemmas and proves the convergence of the proposed algorithm. Section IV provides extensive numerical examples illustrating the theoretical results and demonstrating the effectiveness and applicability of the algorithm. The findings are concluded in Section V with possible future works.

\section{Preliminaries and the model}

\subsection{Problem formulation}

Consider a group of $n$ agents, collectively trying to reveal the underlying true state of nature, denoted as $\theta^*$, from a finite set of hypotheses $\Theta=\{\theta_1,\theta_2,\cdots,\theta_m\}$. At each time step $t=1,2,\cdots$, agent $i$ obtains $p$ types of observations $\left\{s^l_{i,t}\right\}_{l=1}^p$, which may come from multiple perspectives or represent different features of the true state. Each element $s^l_{i,t}$ is the realization of an environmental random variable $\bm{S}^l_{i,t}$. The set $\bm{s}^l_t=\left\{s^l_{1,t},s^l_{2,t},\cdots,s^l_{n,t}\right\}$ represents the actual observations made by all agents from signal type $l$ at time $t$, generated according to the likelihood function $\bm{f}^l(\cdot)$ associated with the underlying true state $\theta^*$. The set $\tilde{\bm{s}}_t=\left\{\bm{s}^1_t,\cdots,\bm{s}^p_t\right\}$ and $\tilde{\bm{f}}=\bm{f}^1\times\cdots\times\bm{f}^p$. Each $\bm{S}^l_{i,t}$ has its individual observation space $S^l_i$ and is i.i.d. with respect to $t$.
%We denote by $\bm{S}_t=(\bm{S}_{1,t},\cdots,\bm{S}_{n,t})^\top$ the collection of all agents' random variables at time $t$.

The signal structure for agent $i$ with signal type $l$ and possible state $\theta$ is described by a probability distribution $\ell^l_i(\cdot|\theta)$. In these settings, $\ell^l_i(s^l_{i,t}|\theta)$ indicates the likelihood of agent $i$ observing type $l$ signal $s^l_{i,t}$ at time $t$ when it believes $\theta$ is the underlying true state.

The agents interact in a networked fashion, which is usually modelled by a directed graph $\mathcal{G}=(\mathcal{V},\mathcal{E})$. $\mathcal{V}=\{1,2,\cdots,n\}$ is the set of vertices representing the $n$ agents, and $\mathcal{E}=\{(i,j)|\textrm{agent~}j\textrm{~can receive information from agent~}i\}$ is the set of directed edges. We denote $A=(a_{ij})_{n\times n}$ as the weight matrix of $\mathcal{G}$, which is assumed to be row-stochastic, i.e., $\sum\limits_{j=1}^na_{ij}=1, \forall i=1,\cdots,n$, and $a_{ij}>0$ if $(j,i)\in\mathcal{E}$. The row-stochastic condition of $A$ ensures that all agents assign normalized weights to the information, i.e., proportions of the total, that they receive from neighbors.

The \textit{belief} of agent $i$ at time $t$ with signal type $l$ is denoted as $\mu^l_{i,t}$, which is a probability distribution over the set of states $\Theta$, i.e., $\sum\limits_{k=1}^m\mu^l_{i,t}(\theta_k)=1$, $\forall i=1,\cdots,n, \forall l=1,\cdots,p$, and $\forall t=0,1,\cdots$. Here $\mu^l_{i,0}$ represents the \textit{initial belief} of agent $i$ with signal type $l$.

Define a probability triple $(\Omega,\mathcal{F},\mathbb{P}^*)$, where $\Omega=\{\omega|\omega=(\tilde{\bm{s}}_1,\tilde{\bm{s}}_2,\cdots)\}$, $\mathcal{F}$ is the $\sigma$-algebra generated by the observations, and $\mathbb{P}^*$ is the probability measure induced by paths in $\Omega$, i.e., $\mathbb{P}^*=\prod\limits_{t=1}^\infty\tilde{\bm{f}}$. We use $\mathbb{E}^*[\cdot]$ to denote the expectation operator associated with measure $\mathbb{P}^*$.

In this work, we consider the following two different circumstances:

Circumstance 1: For every agent $i$ and every signal type $l$, the signal structure $\ell^l_i(\cdot|\theta^*)$ aligns with the $i$-th marginal distribution of $\bm{f}^l(\cdot)$ for all $l=1,\cdots,p$, thereby characterizing the probability distribution of $\bm{S}^l_{i,t}$. In this case, all agents' \textit{a priori} knowledge is accurate, and none of the signal types are misleading.

Circumstance 2: The condition in Circumstance 1 is not satisfied, and there may exist a signal type $l$ such that $\bm{\ell}^l(\cdot|\theta^*)$ is not the best match of the real distribution $\bm{f}^l(\cdot)$ from the group's perspective. In this case, the group may experience false learning solely based on the type $l$ signal.

The second circumstance could be quite common in practical applications, often arising from faults in signal perception or incorrect prior information due to a lack of training data.

\subsection{Social Learning Strategies with Multiple Signals}

Non-Bayesian social learning typically involves two steps for agents to update their beliefs at each time, i.e., the Bayesian update step and the aggregation of neighbors' beliefs ~\cite{qipeng2015distributed, lalitha2014social,nedic2017fast}. In the belief aggregation step, every agent shares its current belief with its one-hop neighbors, whereas in the Bayesian update step, every agent combines its prior belief with observations from the environment to form its posterior belief.

Traditionally, when dealing with tasks involving multiview observations, social learning algorithms integrate these diverse signals into a single signal and design a joint likelihood function as the signal structure. This approach demands a thorough understanding of the correlations among different viewpoints of observations, often making it challenging to achieve in practical tasks. In our work, however, we allow agents to independently perform Bayesian inference for each signal type and integrate information from multiview observations during the information aggregation process.

The algorithm we propose can be described in the following two steps:

1) Information aggregation. For each agent $i$ and signal type $l=1,\cdots,p$, we calculate the updated belief using the formula:
\begin{equation*}
\begin{aligned}
\tilde{\mu}_{i,t+1}^l&(\theta)=\\
&\frac{\exp\left(\gamma_l\sum\limits_{j=1}^na_{ij}\log\mu_{j,t}^l(\theta)+\sum\limits_{k\neq l}^p\gamma_k \log \mu_{i,t}^k(\theta)\right)}{\sum\limits_{\theta'\in\Theta}\exp\left(\gamma_l\sum\limits_{j=1}^na_{ij}\log\mu_{j,t}^l(\theta')+\sum\limits_{k\neq l}^p\gamma_k \log \mu_{i,t}^k(\theta')\right)},
\end{aligned}
\end{equation*}
where the assigned parameter $\gamma_l\in (0,1)$ for all $l=1,\cdots,p$, and $\sum\limits_{l=1}^p\gamma_l=1$.

2) Bayesian update. For each agent $i$ and signal type $l=1,\cdots,p$, the posterior belief is given by:
\begin{equation*}
\mu_{i,t+1}^l(\theta)=\frac{\tilde{\mu}^l_{i,t+1}(\theta)\ell_i^l(s^l_{i,t+1}|\theta)}{\sum\limits_{\theta'\in\Theta}\tilde{\mu}^l_{i,t+1}(\theta')\ell_i^l(s^l_{i,t+1}|\theta')}.
\end{equation*}

\section{Assumptions and results}

As widely discussed in previous works of social learning, we care about the convergence of the algorithms as well as the rate of convergence. The following assumptions are required to ensure the convergence of the proposed social learning strategies:
\begin{assumption}[Communication network]
The graph $\mathcal{G}=(\mathcal{V},\mathcal{E})$ and its weight matrix $A$ satisfy that:

a) The graph is strongly-connected;

b) $A$ has at least one positive diagonal entry.
\end{assumption}

Assumption 1 ensures that $A$ is the transition matrix of an irreducible, aperiodic Markov chain with finite states. We recall the following lemma~\cite{hoel1986introduction}:
\begin{lemma}
If a Markov chain with finite states is irreducible, then it has a unique stationary distribution $\pi$. Let $A$ be the transition matrix of the Markov chain and further suppose it is aperiodic, then we have $\lim\limits_{k\rightarrow\infty}[A^k]_{ij}=\pi_j$, for $1\le i,j\le n$.
\end{lemma}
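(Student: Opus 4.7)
The plan is to split the lemma into two claims: (i) under irreducibility alone, there exists a unique probability vector $\pi$ satisfying $\pi^T A = \pi^T$; and (ii) under the additional hypothesis of aperiodicity, the iterates $A^k$ converge entrywise to the rank-one matrix $\mathbf{1}\pi^T$, so that $[A^k]_{ij}\to\pi_j$ for all $i,j$. Because $A$ is nonnegative and row-stochastic, the main analytic engine throughout will be the Perron--Frobenius theorem; the graph-theoretic hypotheses in Assumption~1 translate directly into the algebraic hypotheses that theorem requires.

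For (i), I would first observe that $\mathbf{1}$ is a right eigenvector of $A$ with eigenvalue $1$, and a Gershgorin-circle argument identifies $1$ as the spectral radius. Strong connectivity of $\mathcal{G}$ is equivalent to irreducibility of the nonnegative matrix $A$, so Perron--Frobenius guarantees that this top eigenvalue is simple and that its associated left eigenvector can be chosen strictly positive; normalizing that eigenvector to have entries summing to one yields the desired $\pi$, and simplicity of the eigenvalue supplies uniqueness.

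For (ii), the key step is to upgrade irreducibility to primitivity: aperiodicity asserts that the greatest common divisor of closed-walk lengths at any vertex equals $1$, and a standard combinatorial number-theoretic lemma (every sufficiently large integer is a nonnegative integer combination of a finite gcd-one set of positive integers) then produces some $k_0$ for which $A^{k_0}$ has all strictly positive entries. The Perron--Frobenius theorem for primitive matrices now yields that every eigenvalue other than $1$ has modulus strictly less than $1$, so $A^k$ converges to the spectral projector onto the one-dimensional top eigenspace; identifying this projector with $\mathbf{1}\pi^T$ via the left and right eigenvectors already in hand finishes the proof. The main obstacle is precisely this primitivity step, since it is the only place where aperiodicity plays a genuinely analytic role rather than a bookkeeping one; a self-contained alternative would be a coupling argument between a chain started at state $i$ and one started from $\pi$, with primitivity again furnishing the exponential tail on the coupling time that drives the total-variation distance to zero.
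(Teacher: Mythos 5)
Your proposal is correct. Note that the paper does not prove this lemma at all --- it is recalled as a standard result and cited to a Markov-chain textbook --- so there is no in-paper argument to diverge from; your Perron--Frobenius route (irreducibility $\Rightarrow$ simple top eigenvalue $1$ with positive left eigenvector, hence a unique stationary $\pi$; aperiodicity $\Rightarrow$ primitivity via the numerical-semigroup lemma $\Rightarrow$ spectral gap $\Rightarrow$ $A^k\to\mathbf{1}\pi^{T}$) is precisely the standard proof, and all the steps you outline (Gershgorin for $\rho(A)=1$, simplicity for uniqueness, the projector identification) are sound. The only point worth making explicit when you write it up is that passing from positivity of diagonal entries $[A^{k}]_{ii}$ for large $k$ to positivity of all entries of $A^{k_0}$ uses irreducibility once more to prepend a path from $i$ to $j$.
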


The stationary distribution $\pi$ can be interpreted as the normalized left eigenvector of $A$ corresponding to eigenvalue $1$, known as the \textit{eigenvector centrality} in related literature. The Perron-Frobenius theorem ensures that all components of $\pi$ are strictly positive.

\begin{assumption}[Belief and signal structure]
Every agent $i=1,\cdots,n$ in the group satisfies:

a) It has positive initial beliefs on all states regarding all types of signals, i.e., $\mu^l_{i,0}(\theta)>0$ for all $l=1,\cdots,p$ and $\theta\in\Theta$;

 b) The logarithms of its signal structures are integrable, i.e., $\mathbb{E}^*\left[|\log\ell^l_i(s^l_i|\theta)|\right]<\infty$ for all $l=1,\cdots,p$, $s^l_i\in S^l_i$, and $\theta\in\Theta$.
\end{assumption}

Assumption 2a) is imposed to ensure the well-definedness of $\log\mu^l_{i,t}(\cdot)$. Meanwhile, Assumption 2b) guarantees that $\log\ell^l_i(s^l_i|\theta)$ is real-valued almost surely\cite{cinlar2011probability}. In practical scenarios where the signal structures of the agents are Gaussian, Assumption 2b) holds naturally since Gaussian random variables are square integrable.

Two states, $\theta_j$ and $\theta_k$, are called \textit{observationally equivalent} with signal type $l$ for agent $i$ if $\ell_i^l(s_i^l|\theta_j)=\ell_i^l(s_i^l|\theta_k)$ for all $s^l_i\in S^l_i$, in which case the agent can not distinguish between these states using its own information obtained from type $l$ signal. The true state is called \textit{globally identifiable} if the set $\hat{\Theta}=\bigcap\limits_{l=1}^p\bigcap\limits_{i=1}^n\hat{\Theta}_i^l$ has only one element $\theta^*$, where $\hat{\Theta}_i^l=\{\theta\in\Theta|\ell^l_i(s_i|\theta)=\ell^l_i(s^l_i|\theta^*), \forall s^l_i\in S^l_i\}$. Intuitively, if a state $\theta'$ is observationally equivalent to $\theta^*$ with all types of signals for all agents, i.e., $\hat{\Theta}=\{\theta^*,\theta'\}$, then the two states are indistinguishable from the view of all agents, and they can not collectively learn the underlying true state.

To ensure the convergence of groups' beliefs on the true state, we introduce the following assumption:

\begin{assumption}[Globally identifiable]
The true state $\theta^*$ is globally identifiable.
\end{assumption}

Under this assumption, for all $\theta\neq\theta^*$, there exists at least one agent $i$ and a signal type $l$ such that $D_{\textrm{KL}}(\ell^l_i(\cdot|\theta^*)\parallel\ell^l_i(\cdot|\theta))$ is strictly positive, where $D_{\textrm{KL}}(P\parallel Q)$ represents the Kullback-Leibler divergence between two probability distributions $P$ and $Q$.

Denote in the following that $K^l_i(\theta^*,\theta)=D_{\textrm{KL}}(f^l_i(\cdot)\parallel\ell^l_i(\cdot|\theta^*))-D_{\textrm{KL}}(f^l_i(\cdot)\parallel\ell^l_i(\cdot|\theta))$. Its positivity or negativity depends on whether, from the perspective of agent $i$, state $\theta$ or $\theta^*$ is more likely to be the underlying true state. Notice that under Circumstance 1, $K_i^l(\theta^*,\theta)=-D_{\textrm{KL}}(\ell^l_i(\cdot|\theta^*)\parallel\ell^l_i(\cdot|\theta))$. Now we can state the main results describing the correct convergence of the proposed strategy.

\begin{theorem}
Under Circumstance 1 and Assumptions 1, 2 and 3, the proposed social learning strategy satisfies:
\begin{equation*}
\lim_{t\rightarrow\infty}\mu^l_{i,t}(\theta^*)=1,\quad\mathbb{P}^*{\rm-a.s.},\quad\forall 1\le i\le n, 1\le l\le p.
\end{equation*}
\end{theorem}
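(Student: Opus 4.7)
The plan is to convert the multiplicative belief update into a linear recursion on log-belief ratios, interpret the coefficient matrix as the transition matrix of an augmented Markov chain over the $np$ agent-signal pairs, and then apply a strong-law argument to identify the asymptotic exponential decay rate of $\mu^l_{i,t}(\theta)$ for each $\theta\neq\theta^*$.

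For each $\theta\neq\theta^*$ I would define $\varphi^l_{i,t}(\theta):=\log\bigl(\mu^l_{i,t}(\theta^*)/\mu^l_{i,t}(\theta)\bigr)$, which is well-defined by Assumption 2a). Taking logarithms of the two-step update (the normalization constants cancel in the ratio) yields the linear recursion
\begin{equation*}
\varphi^l_{i,t+1}(\theta)=\gamma_l\sum_{j=1}^{n}a_{ij}\varphi^l_{j,t}(\theta)+\sum_{k\neq l}\gamma_k\varphi^k_{i,t}(\theta)+\lambda^l_{i,t+1}(\theta),
\end{equation*}
where $\lambda^l_{i,t+1}(\theta):=\log\bigl(\ell^l_i(s^l_{i,t+1}|\theta^*)/\ell^l_i(s^l_{i,t+1}|\theta)\bigr)$ is i.i.d.\ in $t$ and integrable by Assumption 2b). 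Stacking over pairs $(i,l)$ gives $\Phi_{t+1}(\theta)=M\Phi_t(\theta)+L_{t+1}(\theta)$, with $M$ having entries $\gamma_l a_{ij}$ at position $((i,l),(j,l))$ and $\gamma_k$ at position $((i,l),(i,k))$ for $k\neq l$. A direct check shows $M$ is row-stochastic; it is irreducible because one can compose same-signal moves along the strongly connected graph $\mathcal{G}$ (Assumption 1a) with strictly positive cross-signal transitions of weight $\gamma_k$, and it is aperiodic because the positive diagonal entry of $A$ (Assumption 1b) produces a self-loop in the augmented chain. Lemma 1 then delivers a unique stationary distribution $\pi$ with strictly positive components and the geometric limit $M^t\to\mathbf{1}\pi^\top$.

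Unrolling gives $\Phi_t(\theta)=M^t\Phi_0(\theta)+\sum_{\tau=1}^{t}M^{t-\tau}L_\tau(\theta)$. Dividing by $t$, the transient term vanishes, and for the noise term I would split $M^{t-\tau}=\mathbf{1}\pi^\top+(M^{t-\tau}-\mathbf{1}\pi^\top)$. The first piece converges almost surely to $\mathbf{1}\pi^\top\mathbb{E}^*[L(\theta)]$ by Kolmogorov's SLLN applied to the i.i.d.\ integrable vectors $L_\tau(\theta)$. The second piece is the principal technical obstacle, since the coefficients depend on both $t$ and $\tau$; I would handle it through a standard $\varepsilon$-splitting: given $\varepsilon>0$ pick $T_\varepsilon$ so that $\|M^{s}-\mathbf{1}\pi^\top\|\leq\varepsilon$ for $s\geq T_\varepsilon$, bound the $T_\varepsilon$-tail using $\|L_\tau\|=o(\tau)$ a.s.\ (a Borel--Cantelli consequence of integrability), and dominate the prefix by $\varepsilon$ times the Cesaro average $\tfrac{1}{t}\sum_{\tau=1}^{t}\|L_\tau(\theta)\|\to\mathbb{E}^*\|L(\theta)\|$.

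Under Circumstance 1, $\ell^l_i(\cdot|\theta^*)$ equals the $i$-th marginal of $\bm{f}^l$, so $\mathbb{E}^*[\lambda^l_i(\theta)]=D_{\textrm{KL}}\bigl(\ell^l_i(\cdot|\theta^*)\parallel\ell^l_i(\cdot|\theta)\bigr)\geq 0$, and Assumption 3 forces at least one of these divergences to be strictly positive for every $\theta\neq\theta^*$. Combined with the strict positivity of every component of $\pi$, this gives $\pi^\top\mathbb{E}^*[L(\theta)]>0$, so $\varphi^l_{i,t}(\theta)/t$ converges a.s.\ to a strictly positive constant for each $(i,l)$ and each $\theta\neq\theta^*$. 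Consequently $\mu^l_{i,t}(\theta)\to 0$ a.s., and summing over the finite set $\Theta\setminus\{\theta^*\}$ yields $\mu^l_{i,t}(\theta^*)\to 1$ a.s., as claimed.
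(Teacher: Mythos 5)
Your proposal is correct and follows essentially the same route as the paper: the same log-ratio linearization, the same augmented row-stochastic matrix over the $np$ agent--signal pairs with the same irreducibility/aperiodicity argument, and the same decomposition of the Cesaro sum into a stationary part handled by the SLLN and a vanishing part coming from $M^{t-\tau}\to\mathbf{1}\pi^\top$. If anything, your $\varepsilon$-splitting with the $\|L_\tau\|=o(\tau)$ bound treats the cross term more carefully than the paper's brief boundedness remark.
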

\begin{proof}
For each agent $i$, signal type $l$, and $\theta\neq\theta^*$, we have
\begin{equation*}
\begin{aligned}
\log\frac{\mu^l_{i,t+1}(\theta)}{\mu^l_{i,t+1}(\theta^*)}=&\gamma_l\sum_{j=1}^na_{ij}\log\frac{\mu^l_{j,t}(\theta)}{\mu^l_{j,t}(\theta^*)}+\sum_{k\neq l}^p\gamma_k\log\frac{\mu^k_{i,t}(\theta)}{\mu^k_{i,t}(\theta^*)}\\&+\log\frac{\ell_i^l(s^l_{i,t+1}|\theta)}{\ell^l_i(s^l_{i,t+1}|\theta^*)}.
\end{aligned}
\end{equation*}
By denoting $\nu^l_{i,t+1}(\theta)=\log\frac{\mu^l_{i,t+1}(\theta)}{\mu^l_{i,t+1}(\theta^*)}$ and $L^l_{i,t+1}(\theta)=\log\frac{\ell^l_i(s^l_{i,t+1}|\theta)}{\ell^l_i(s^l_{i,t+1}|\theta^*)}$, the above equation simplifies to
\begin{equation}\label{1}
\nu^l_{i,t+1}(\theta)=\gamma_l\sum_{j=1}^na_{ij}\nu^l_{j,t}(\theta)+\sum_{k\neq l}^p\gamma_k\nu^k_{i,t}(\theta)+L^l_{i,t+1}(\theta).
\end{equation}
Define the $n$-dimensional column vector $\bm{\nu}^l_t(\theta)=\left(\nu^l_{1,t}(\theta),\cdots,\nu^l_{n,t}(\theta)\right)^\top$ for each $l=1,\cdots,p$ and the $np$-dimensional column vector $\tilde{\bm{\nu}}_t(\theta)=\left({\bm{\nu}^1_t(\theta)}^\top,\cdots,{\bm{\nu}^p_t(\theta)}^\top\right)^\top$. Similarly, $\bm{L}^l_t(\theta)=\left(L^l_{1,t}(\theta),\cdots,L^l_{n,t}(\theta)\right)^\top$ and $\tilde{\bm{L}}_t(\theta)=\left({\bm{L}^1_t(\theta)}^\top,\cdots,{\bm{L}^p_t(\theta)}^\top\right)^\top$. Additionally, denote the matrix
\begin{equation*}
\tilde{A}=\begin{bmatrix}
\gamma_1A & \gamma_2 I & \cdots & \gamma_p I \\
\gamma_1I & \gamma_2A & \cdots & \gamma_p I \\
\vdots & \vdots & \ddots  & \vdots \\
\gamma_1I & \gamma_2 I & \cdots & \gamma_pA
\end{bmatrix},
\end{equation*}
where $I$ is the identity matrix. It is evident that $\tilde{A}$ is a row-stochastic matrix, we further demonstrate that $\tilde{A}$ serves as the transition matrix for an irreducible, aperiodic Markov chain with finite states. Given that $A$ has at least one positive diagonal element, $\tilde{A}$ contains a minimum of $n$ positive diagonal elements, making it aperiodic.

We then examine the strong connectivity of the corresponding graph of the $np\times np$ matrix $\tilde{A}$ to prove its irreducibility. For any node with index $i = n(l-1) + i_0$, where $1 \leq i_0 \leq n$ and $1 \leq l \leq p$, it can establish a path to any node with index $j$ within the range $n(l-1)+1 \leq j \leq nl$ due to the irreducibility of $A$. We need to further find a path from node $i$ to any node $j$, where $j= n(k-1) + j_0$, $1 \leq j_0 \leq n$, and $k \neq l$. Since $\tilde{a}_{n(k-1)+i_0,i} = \left[\gamma_l I\right]_{i_0i_0} = \gamma_l > 0$, a path $P_1$ from node $i$ to node $n(k-1) + i_0$ exists. Additionally, node $n(k-1) + i_0$ can find a path $P_2$ to node $j$ as previously demonstrated. Combining these paths as $P = P_1 \cup P_2$ establishes a path from node $i$ to node $j$. Thus, we can conclude that every node $i$ has a path to any node $j$ in the graph corresponding to matrix $\tilde{A}$, confirming its irreducibility.

Subsequently we can rewrite \eqref{1} in matrix form:
\begin{equation*}
\tilde{\bm{\nu}}_{t+1}(\theta)=\tilde{A}\tilde{\bm{\nu}}_{t}(\theta)+\tilde{\bm{L}}_{t+1}(\theta).
\end{equation*}
Now it follows that
\begin{equation}\label{2}
\begin{aligned}
\frac{1}{t}&\tilde{\bm{\nu}}_{t+1}(\theta)=\frac{1}{t}\tilde{A}\tilde{\bm{\nu}}_{t}(\theta)+\frac{1}{t}\tilde{\bm{L}}_{t+1}(\theta)=\cdots\\
&=\frac{1}{t}\tilde{A}^{t+1}\tilde{\bm{\nu}}_{0}(\theta)+\frac{1}{t}\sum_{k=1}^t\tilde{A}^k\tilde{\bm{L}}_{t+1-k}(\theta)+\frac{1}{t}\tilde{\bm{L}}_{t+1}(\theta).
\end{aligned}
\end{equation}
The assumptions admit that the first and the third terms on r.h.s. of (\ref{2}) go to zero as $t\rightarrow\infty$. The second term can be deformed as
\begin{equation}\label{3}
\begin{aligned}
\frac{1}{t}\sum_{k=1}^t\tilde{A}^k\tilde{\bm{L}}_{t+1-k}(\theta)&=\frac{1}{t}\sum_{k=1}^t(\tilde{A}^k-\bm{1}_{np}\tilde{\pi})\tilde{\bm{L}}_{t+1-k}(\theta)\\
&+\frac{1}{t}\sum_{k=1}^t\bm{1}_{np}\tilde{\pi}(\tilde{\bm{L}}_{t+1-k}(\theta)-\tilde{\bm{K}}(\theta^*,\theta))\\
&+\frac{1}{t}\sum_{k=1}^t\bm{1}_{np}\tilde{\pi}\tilde{\bm{K}}(\theta^*,\theta),
\end{aligned}
\end{equation}
where $\bm{1}_{np}$ is an $np$-dimensional column vector of ones, $\tilde{\pi}$ is the eigenvector centrality corresponding to matrix $\tilde{A}$ and is a row vector, $\tilde{\bm{K}}(\theta^*,\theta)=\left({\bm{K}^1(\theta^*,\theta)}^\top,\cdots,{\bm{K}^p(\theta^*,\theta)}^\top\right)^\top$, and $\bm{K}^l(\theta^*,\theta)=\left(K^l_{1}(\theta^*,\theta),\cdots,K^l_{n}(\theta^*,\theta)\right)^\top$. Lemma 1 admits that $\lim\limits_{k\rightarrow\infty}\tilde{A}^k=\bm{1}_{np}\tilde{\pi}$. Noticing that all elements of $\tilde{A}^k (k=1,2,\cdots)$ are bounded, the first term on r.h.s. of (\ref{3}) converges to zero as $t\rightarrow\infty$. Moreover, under Circumstance 1, for all $l=1,\cdots,p$  we have
\begin{equation*}
\begin{aligned}
\mathbb{E}^*[L^l_{i,t}(\theta)]&=\mathbb{E}^*\left[\log\frac{\ell^l_i(s^l_{i,t}|\theta)}{\ell^l_i(s^l_{i,t}|\theta^*)}\right]\\
&=\int\limits_{s^l\in S^l_i}\ell^l_i(s^l|\theta^*)\log\frac{\ell^l_i(s^l|\theta)}{\ell^l_i(s^l|\theta^*)}{\rm d}s^l\\
&=-D_{\textrm{KL}}(\ell^l_i(\cdot|\theta^*)\parallel\ell^l_i(\cdot|\theta))=K^l_i(\theta^*,\theta).
\end{aligned}
\end{equation*}
The Kolmogorov's strong law of large numbers gives that $\forall l=1,\cdots,p$,
\begin{equation*}
\frac{1}{t}\sum_{k=1}^t\bm{L}^l_{t+1-k}(\theta)-\frac{1}{t}\sum_{k=1}^t\mathbb{E}^*[\bm{L}^l_{t+1-k}(\theta)]\rightarrow \mathbf{0},\quad\mathbb{P}^*{\rm-a.s.},
\end{equation*}
as $t\rightarrow\infty$, which leads to
\begin{equation*}
\lim_{t\rightarrow\infty}\frac{1}{t}\sum_{k=1}^t\bm{1}_{np}\tilde{\pi}(\tilde{\bm{L}}_{t+1-k}(\theta)-\tilde{\bm{K}}(\theta^*,\theta))=\mathbf{0},\quad\mathbb{P}^*{\rm-a.s.}.
\end{equation*}
Now (\ref{3}) gives that
\begin{equation*}
\lim_{t\rightarrow\infty}\frac{1}{t}\sum_{k=1}^t\tilde{A}^k\tilde{\bm{L}}_{t+1-k}(\theta)=\bm{1}_{np}\tilde{\pi}\tilde{\bm{K}}(\theta^*,\theta),\quad\mathbb{P}^*{\rm-a.s.}.
\end{equation*}
According to Assumption 3 and Lemma 1, for all $\theta\neq\theta^*$ we have
\begin{equation}\label{4}
\lim\limits_{t\rightarrow\infty}\frac{1}{t}\tilde{\bm{\nu}}_{t+1}(\theta)<\mathbf{0},\quad \mathbb{P}^*{\rm-a.s.}.
\end{equation}
Thus $\nu^l_{i,t+1}(\theta)\rightarrow-\infty$ almost surely for all agents $i=1,\cdots,n$ and signal types $l=1,\cdots,p$. This implies $\mu^l_{i,t}(\theta)\rightarrow 0$ for all $i=1,\cdots,n$ and $l=1,\cdots,p$ almost surely.
\end{proof}

In the proof of Theorem 1, it is noteworthy that we construct a new row-stochastic matrix $\tilde{A}$ and demonstrate its primitivity. Hence, our proposed algorithm can be viewed as duplicating the network $\mathcal{G}$ of agents into $p$ identical subnetworks $\mathcal{G}^1,\cdots,\mathcal{G}^p$, establishing bidirectional links between each node and its duplicate, and assigning a distinct signal to each subnetwork for classic non-Bayesian social learning with geometric averaging. The weight matrix corresponding to the augmented network is exactly $\tilde{A}$. An illustration is shown in Fig. \ref{illustration}.

\begin{figure}[t]
\centering
\includegraphics[width=1\linewidth]{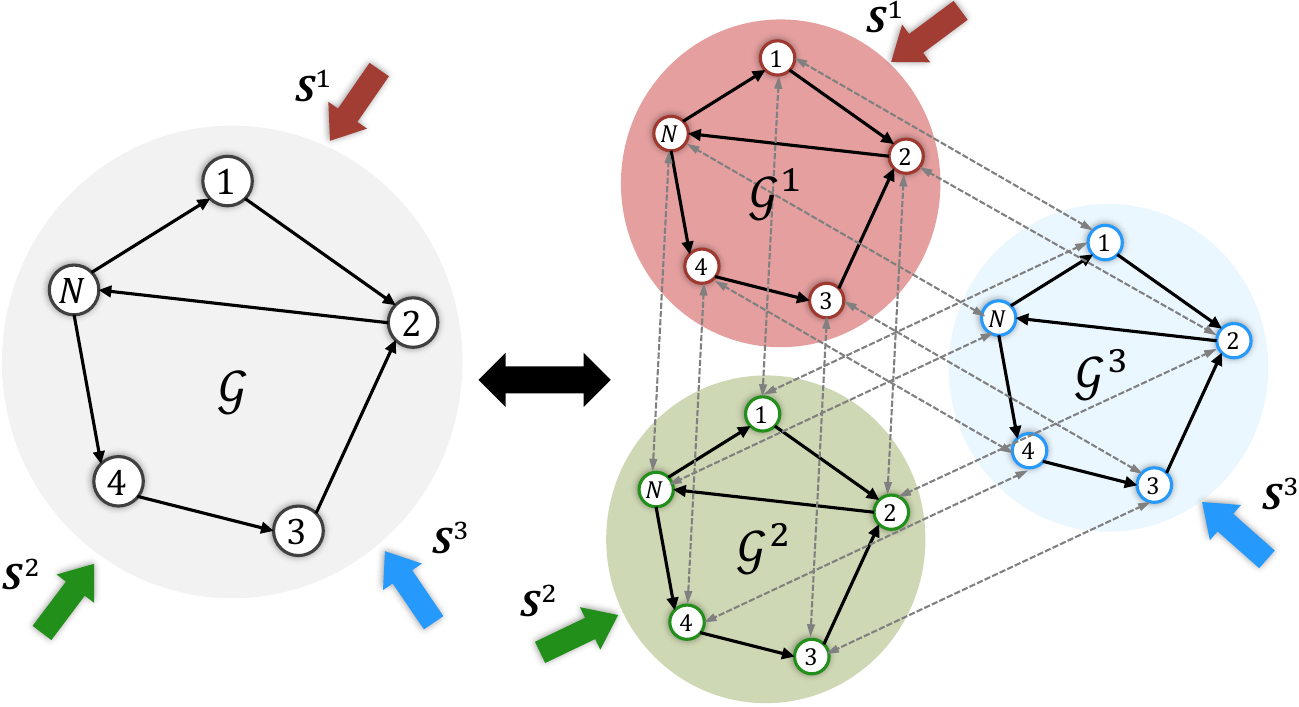}
\caption{An intuitive illustration for an understanding of the proposed algorithm.}\label{illustration}
\end{figure}

Theorem 1 guarantees that all agents will eventually learn the underlying true state with our learning strategy as long as some fundamental assumptions are satisfied. Notably, our algorithm does not require every type of signal to be informative for the group. As long as, for every pair of states $\theta$ and $\theta^*$, there exist an agent capable of distinguishing between them with a certain type of signal, the entire group can achieve correct learning. Subsequently, we will demonstrate that, in certain scenarios, our method is capable of learning correctly even in the presence of misleading signals.

From the proof of Theorem 1, as long as \eqref{4} is satisfied, all agents will learn the underlying true state correctly. Therefore, it is important to figure out $\tilde{\pi}$, leading to the following lemma.
\begin{lemma}
Let $\tilde{\pi}$ be the normalized left eigenvector of matrix $\tilde{A}$ associated with eigenvalue 1, and $\pi$ be the normalized left eigenvector of matrix $A$ associated with eigenvalue 1. Then, we have $\tilde{\pi}=\left(\gamma_1\pi,\cdots,\gamma_p\pi\right)$.
\end{lemma}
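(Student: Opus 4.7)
The plan is to directly verify the candidate $\tilde{\pi}=(\gamma_1\pi,\cdots,\gamma_p\pi)$ by block-wise multiplication, and then appeal to uniqueness via Perron--Frobenius on the irreducible, aperiodic matrix $\tilde{A}$ established in the proof of Theorem 1.

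First, I would write $\tilde{\pi}$ as a row vector of $p$ blocks of size $n$, with the $k$-th block equal to $\gamma_k \pi$. Denoting the $(k,l)$ block of $\tilde{A}$ by $[\tilde{A}]_{kl}$, the structure of $\tilde{A}$ gives $[\tilde{A}]_{kl}=\gamma_l I$ for $k\neq l$ and $[\tilde{A}]_{ll}=\gamma_l A$. The $l$-th block of the product $\tilde{\pi}\tilde{A}$ is therefore
\begin{equation*}
\sum_{k=1}^{p}(\gamma_k\pi)[\tilde{A}]_{kl}=\sum_{k\neq l}(\gamma_k\pi)(\gamma_l I)+(\gamma_l\pi)(\gamma_l A)=\gamma_l\pi\sum_{k\neq l}\gamma_k+\gamma_l^2(\pi A).
\end{equation*}

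Second, I would invoke the defining property $\pi A=\pi$ of the eigenvector centrality of $A$ (guaranteed by Lemma 1 applied to $A$) together with the normalization $\sum_{k=1}^{p}\gamma_k=1$. Substituting these reduces the $l$-th block to $\gamma_l\pi(1-\gamma_l)+\gamma_l^2\pi=\gamma_l\pi$, which matches the $l$-th block of $\tilde{\pi}$. Hence $\tilde{\pi}\tilde{A}=\tilde{\pi}$, so $\tilde{\pi}$ is a left eigenvector of $\tilde{A}$ for eigenvalue $1$. Normalization is immediate since $\sum_{l=1}^{p}\sum_{i=1}^{n}\gamma_l\pi_i=\bigl(\sum_l\gamma_l\bigr)\bigl(\sum_i\pi_i\bigr)=1$.

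Finally, I would note that uniqueness of the normalized left Perron vector follows from the irreducibility and aperiodicity of $\tilde{A}$, both of which were verified inside the proof of Theorem 1. This pins down $\tilde{\pi}$ as the claimed vector. I do not anticipate any substantive obstacle: the argument is a direct verification, and the only mild care needed is in tracking the block indexing so that the off-diagonal contribution $\gamma_l\sum_{k\neq l}\gamma_k$ and the diagonal contribution $\gamma_l^2$ combine cleanly via $\sum_k\gamma_k=1$.
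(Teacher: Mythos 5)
Your proposal is correct and follows essentially the same route as the paper: a direct block-wise verification that $\tilde{\pi}\tilde{A}=\tilde{\pi}$ using $\pi A=\pi$ and $\sum_l\gamma_l=1$. Your additional remarks on normalization and on uniqueness via the irreducibility and aperiodicity of $\tilde{A}$ are sound and merely make explicit what the paper leaves implicit.
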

\begin{proof}
\begin{equation*}
\begin{aligned}
\tilde{\pi}\tilde{A}=&(\gamma_1(\gamma_1\pi A+\gamma_2\pi+\cdots+\gamma_p\pi),\cdots,\\ &\quad\gamma_p(\gamma_1\pi+\gamma_2\pi+\cdots+\gamma_p\pi A))\\
=&\left(\gamma_1\pi,\cdots,\gamma_p\pi\right)=\tilde{\pi}.
\end{aligned}
\end{equation*}
The second equality follows from $\pi A=\pi$ and $\sum\limits_{l=1}^p\gamma_l=1$.
\end{proof}
\begin{theorem}
Under Circumstance 2 and Assumptions 1 and 2, all agents will correctly learn the underlying true state, i.e.,
\begin{equation*}
\lim_{t\rightarrow\infty}\mu^l_{i,t}(\theta^*)=1,\quad\mathbb{P}^*{\rm-a.s.},\quad\forall 1\le i\le n, 1\le l\le p,
\end{equation*}
if and only if
\begin{equation*}\label{main}
\sum\limits_{l=1}^p\gamma_l\sum\limits_{i=1}^n\pi_iK_i^l(\theta^*,\theta)<0.
\end{equation*}
\end{theorem}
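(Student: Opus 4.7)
The plan is to reuse the linearized recursion from Theorem~1 essentially verbatim, since that recursion is a purely algebraic consequence of the update rules and depends on neither Circumstance~1 nor Assumption~3. First I would carry over $\tilde{\bm{\nu}}_{t+1}(\theta)=\tilde{A}\tilde{\bm{\nu}}_{t}(\theta)+\tilde{\bm{L}}_{t+1}(\theta)$, the telescoping expansion \eqref{2}, and the three-term decomposition \eqref{3}. Assumptions~1 and~2 are exactly what is needed to make the first and third terms on the right-hand side of \eqref{2} and the first term on the right-hand side of \eqref{3} vanish, just as in Theorem~1.

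The only genuinely new step under Circumstance~2 is the computation of $\mathbb{E}^*[L^l_{i,t}(\theta)]$: integrating against the \emph{true} density $f^l_i$ rather than $\ell^l_i(\cdot\mid\theta^*)$ gives $\mathbb{E}^*[L^l_{i,t}(\theta)]=D_{\textrm{KL}}(f^l_i\parallel\ell^l_i(\cdot\mid\theta^*))-D_{\textrm{KL}}(f^l_i\parallel\ell^l_i(\cdot\mid\theta))=K^l_i(\theta^*,\theta)$, which is precisely the generalized definition already introduced before Theorem~1. Kolmogorov's SLLN then yields $\frac{1}{t}\tilde{\bm{\nu}}_{t+1}(\theta)\to\bm{1}_{np}\tilde{\pi}\tilde{\bm{K}}(\theta^*,\theta)$ almost surely. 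Substituting the explicit form $\tilde{\pi}=(\gamma_1\pi,\ldots,\gamma_p\pi)$ from Lemma~2 identifies the scalar limit as $\sum_{l=1}^{p}\gamma_l\sum_{i=1}^{n}\pi_i K^l_i(\theta^*,\theta)$, which is the quantity appearing in the theorem.

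With this limit in hand, both directions are short. For sufficiency, strict negativity of the scalar for every $\theta\neq\theta^*$ forces $\nu^l_{i,t}(\theta)\to-\infty$ a.s.\ uniformly in $i,l$, hence $\mu^l_{i,t}(\theta)\to 0$, and summing to one over the finite set $\Theta$ gives $\mu^l_{i,t}(\theta^*)\to 1$ a.s. For necessity I would argue by contrapositive: if the scalar is strictly positive for some $\theta'\neq\theta^*$, then $\nu^l_{i,t}(\theta')\to+\infty$ a.s., so $\mu^l_{i,t}(\theta')/\mu^l_{i,t}(\theta^*)\to\infty$ and therefore $\mu^l_{i,t}(\theta^*)\to 0$, ruling out correct learning.

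The main obstacle I expect is the boundary case in which the scalar equals zero for some $\theta'\neq\theta^*$. The first-order analysis above only yields $\nu^l_{i,t}(\theta')=o(t)$, which is not by itself enough to prevent $\mu^l_{i,t}(\theta^*)\to 1$. A clean treatment would require either a second-order (CLT-type) refinement of the fluctuations of $\tilde{\bm{\nu}}_t(\theta')$ around its mean, or a mild nondegeneracy argument that excludes exact cancellation. I would isolate this as the single delicate point of the proof and address it by one of these routes, while leaving the rest of the argument as a direct adaptation of Theorem~1 via Lemma~2.
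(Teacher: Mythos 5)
Your proposal follows essentially the same route as the paper's proof: reuse the recursion and decomposition from Theorem~1, recompute $\mathbb{E}^*[L^l_{i,t}(\theta)]$ against the true density $f^l_i$ to obtain $K^l_i(\theta^*,\theta)$, apply the strong law of large numbers, and substitute $\tilde{\pi}=(\gamma_1\pi,\ldots,\gamma_p\pi)$ from Lemma~2 to identify the scalar condition. The boundary case you flag (the scalar vanishing for some $\theta'\neq\theta^*$) is a genuine delicacy in the ``only if'' direction that the paper's own proof does not address either, so your treatment is if anything more careful than the original.
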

\begin{proof}
Firstly, similar to the proof of Theorem 1, under Circumstance 2 we have
\begin{equation*}
	\begin{aligned}
	\mathbb{E}^*\left[L^l_{i,t}(\theta)\right]&=\mathbb{E}^*\left[\log\frac{\ell^l_i(s^l_{i,t}|\theta)}{\ell^l_i(s_{i,t}|\theta^*)}\right]\\&=\int\limits_{s^l\in S^l_i}f^l_i(s^l)\log\frac{\ell^l_i(s^l|\theta)}{\ell^l_i(s^l|\theta^*)}\mathrm{d}s^l\\
	&=\int\limits_{s^l\in S^l_i}f^l_i(s^l)\left(\log\frac{f^l_i(s^l)}{\ell^l_i(s^l|\theta^*)}-\log\frac{f^l_i(s^l)}{\ell^l_i(s^l|\theta)}\right)\mathrm{d}s^l\\
	&=D_{\textrm{KL}}(f^l_i(\cdot)||\ell^l_i(\cdot|\theta^*))-D_{\textrm{KL}}(f^l_i(\cdot)||\ell^l_i(\cdot|\theta))\\&=K^l_i(\theta^*,\theta).
	\end{aligned}
\end{equation*}
and
\begin{equation}\label{5}
\begin{aligned}
\lim\limits_{t\rightarrow\infty}\frac{1}{t}\tilde{\bm{\nu}}_{t+1}(\theta)&=\bm{1}_{np}\tilde{\pi}\tilde{K}(\theta^*,\theta)\\
&=\bm{1}_p \otimes \sum_{l=1}^p\gamma_l\bm{1}_n\bm{\pi}\bm{K}^l(\theta^*,\theta),
\end{aligned}
\end{equation}
\begin{equation*}
\begin{aligned}
\bm{1}_n\bm{\pi}\bm{K}^l(\theta^*,\theta)=\sum\limits_{i=1}^n\pi_iK_i^l(\theta^*,\theta)\bm{1}_n,
\end{aligned}
\end{equation*}
hence the condition $\sum\limits_{l=1}^p\gamma_l\bm{1}_n\bm{\pi}\bm{K}^l(\theta^*,\theta)<\mathbf{0}$
 holds if and only if $\sum\limits_{l=1}^p\gamma_l\sum\limits_{i=1}^n\pi_iK_i^l(\theta^*,\theta)<0$.
\end{proof}
Theorem 2 demonstrates the robustness of our algorithm when dealing with multiple signals. Even if some types of signals might be misleading, the likelihood of collective mislearning can be reduced by adjusting the assigned parameter $\gamma_l$. Additionally, as can be seen from \eqref{5}, assigning a higher weight $\gamma_l$ to signals that are more instructive, i.e., better able to help the group distinguish between correct and incorrect states, can accelerate the convergence rate of the algorithm.
\section{Numerical examples}

\subsection{Learning with multiview observations}

We first demonstrate that our proposed multiview observations algorithm can address the observational equivalence issue present when only a single viewpoint of information is available.

\begin{example}
Consider a strongly-connected network consisting of two agents. The corresponding weight matrix is
\begin{equation*}
A=\begin{bmatrix}
0 & 1\\
0.7 & 0.3
\end{bmatrix}.
\end{equation*}
The two agents are engaged in the task of localizing a target situated within a $4\times4$ grid. They receive Gaussian signals (type 1) with mean values corresponding to the distances from the target, which could be at any of the 16 grid points. The signal structure of the two agents with respect to $\theta$ also follows a Gaussian distribution, with the mean value equal to the true distance between the agent and $\theta$. As shown in Fig.~\ref{illustration_1}, the two agents struggle to distinguish between two states due to the overlapping circles centered on these states, where each circle has a radius equal to the distance from the target.

At the same time, both agents can receive signals (type 2) regarding whether the target is above or below them. If the target is above an agent, at each moment, there is a 0.8 probability of receiving signal U and a 0.2 probability of receiving signal D. The signal structure is set as $\ell^2_i(U|\theta)=0.8$ if $\theta$ is located above $i$ and $\ell^2_i(U|\theta)=0.2$ if $\theta$ is located below $i$. It is obvious that both of the two agents can not identify the underlying true state based solely on type 2 signal.
\begin{figure}[t]
\centering
\subfigure[]{\includegraphics[width=0.45\linewidth]{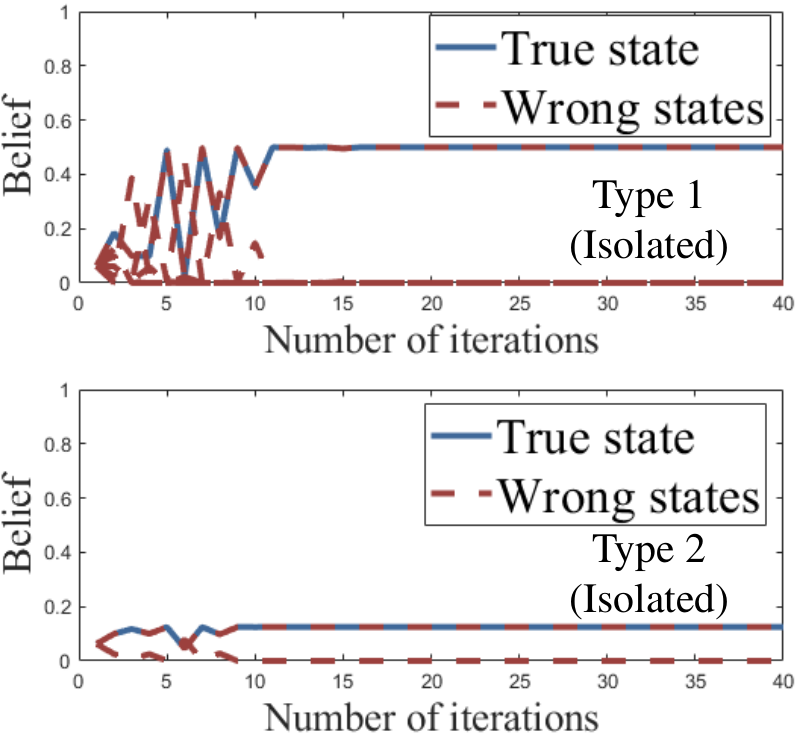}}\hspace{5mm}
\subfigure[]{\includegraphics[width=0.45\linewidth]{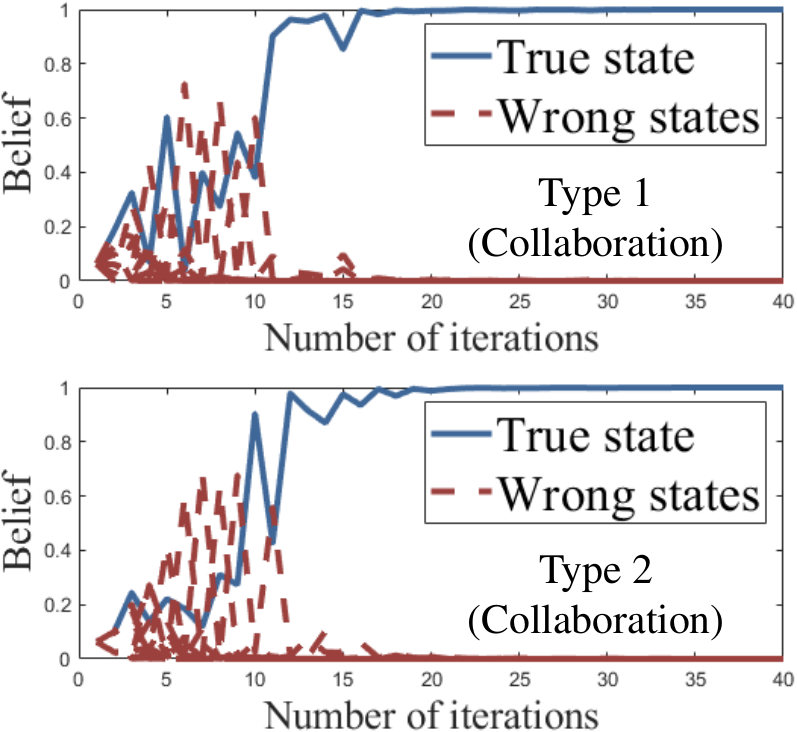}}
\caption{The evolution of beliefs of Agent 1 on different states. (a) The two agents are unable to identify the underlying true state with a single type of signal. (b) The two agents achieve correct learning by combining the information from two types of signals.}\label{fig1}
\end{figure}

\begin{figure}[t]
\centering
\includegraphics[width=0.8\linewidth]{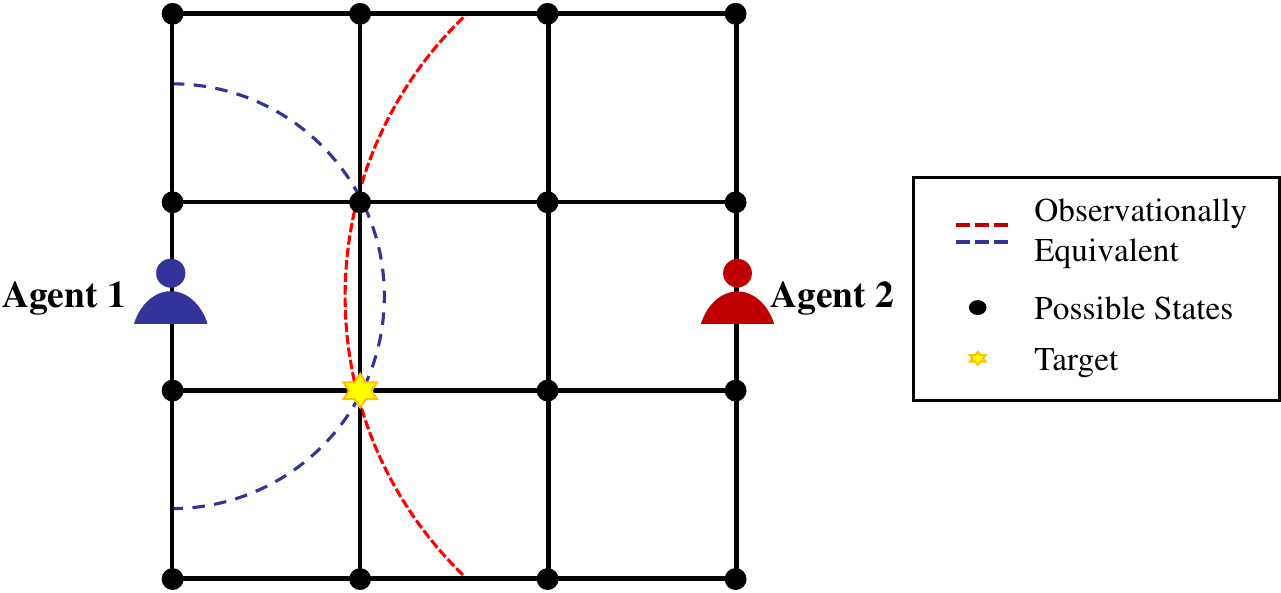}
\caption{Illustration of the scenario in Example 1. In this example, the two agents, due to the observational equivalence problem, cannot achieve correct learning relying solely on a single type of signal.}\label{illustration_1}
\end{figure}
\end{example}
The initial beliefs about both types of signal are uniform distributions over all possible states. Under our settings, the signal structures of all agents and all types of signal about the true state $\theta^*$ align with its actual distribution, which satisfy the conditions in Circumstance 1.

The experimental results indicate that the agents are unable to collectively learn the true state according to one type of signal solely, as shown in Fig.~\ref{fig1} (a). However, by combining the information provided by both types of signals, the two agents can successfully collaborate to identify the underlying true state due to the fact that Assumption 3 is satisfied, as shown in Fig.~\ref{fig2} (b). This example demonstrates that learning from multiple signals can resolve the issue of observationally equivalence present in single-signal scenarios, thereby offering more tolerant conditions for successful learning.

\begin{figure}[t]
\centering
\subfigure[]{\includegraphics[width=0.45\linewidth]{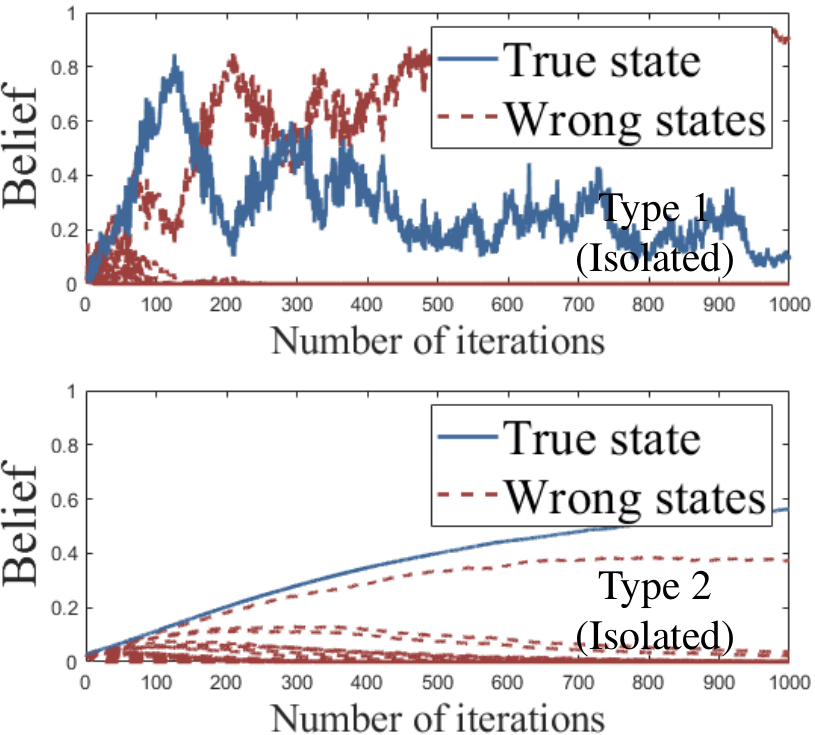}}\hspace{5mm}
\subfigure[]{\includegraphics[width=0.45\linewidth]{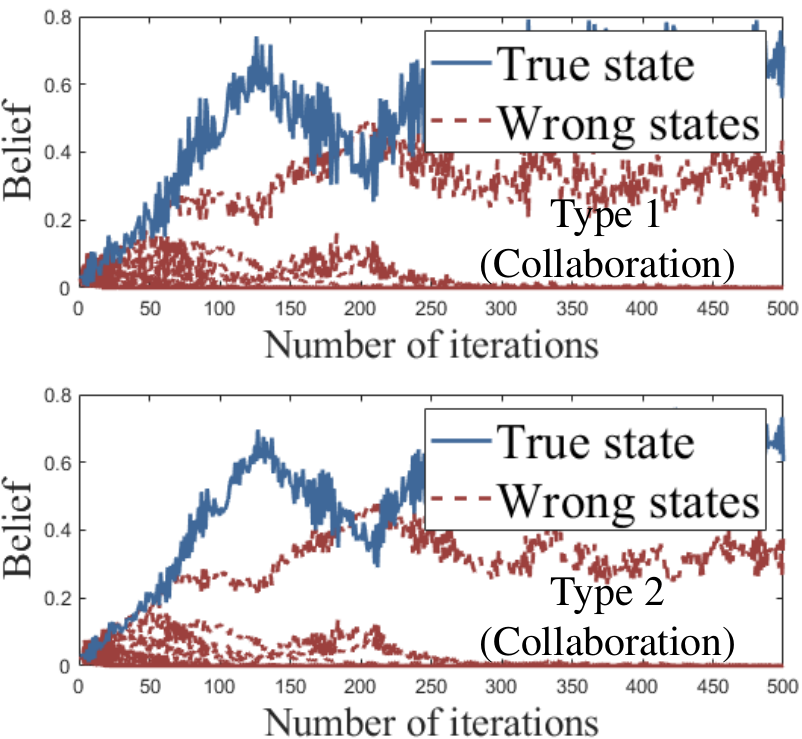}}
\caption{The evolution of beliefs of Agent 1 on all possible states in the first scenario of distributed cooperative localization task. (a) The agent can identify the optimal state solely based on azimuth information, but using only distance information results in erroneous learning. (b) By employing our algorithm to integrate the information from both types of signals, the beliefs of the agent converge to the true state.}\label{fig2}
\end{figure}
\subsection{Learning with misleading signal}

In the following example, we will demonstrate how our proposed algorithm addresses the issue of erroneous learning that may occur with a single viewpoint of signal by aggregating information from multiview observations, thereby validating the enhanced robustness of our algorithm.

\begin{example}
Consider a scenario where a group of $N$ sensors is randomly distributed in a two-dimensional square, denoted as $\left[0,1\right]^2$. Each sensor receives two viewpoints of observations at every time step, one is related to distance and the other related to orientation. The first type is a Gaussian signal with noise, representing the distance from the sensor to the target. Specifically, $\boldsymbol{S}^1_{i,t}\sim\mathcal{N}(d_i^*,\sigma_1^2)$ for all $i=1,\cdots,N$ and $t=1,2,\cdots$, where $d_i^*$ denotes the distance from the $i$-th sensor to the target. The second type of signal also contains Gaussian noise and pertains to the azimuth between the agent and the target, and $\boldsymbol{S}^2_{i,t}\sim\mathcal{N}(\alpha_i^*,\sigma_1^2)$, where $\alpha_i^*$ denotes the azimuth. There are $M$ possible positions $\theta_m$ ($m=1,\cdots,M$) uniformly distributed in the square space. Let $d_i^m$ and $\alpha_i^m$ denote the distance and azimuth from sensor $i$ to a possible state $\theta_m$ respectively. The collective aim of these sensors is to find the state that has a position closest to the target, and it is conceived as underlying true state $\theta^*$. For every agent $i$, we set $\ell^1_i(\cdot|\theta_m)$ to align with $\mathcal{N}(d_i^m,\sigma_1^2)$, and since the orientation serves as auxiliary information, we let $\ell^2_i(\cdot|\theta_m)\sim\mathcal{N}(\alpha_i^m,10\sigma_2^2)$.

We set $N = 2$, $M=36$, $\sigma_1=\sigma_2=0.5$, and $\gamma_1=\gamma_2=0.5$. The weight matrix is the same as that in Example 1, indicating that the corresponding network is strongly-connected. The initial beliefs about both types of signal are uniform distributions over all possible states. In situations where the target does not coincide with any possible state, our experimental setup clearly meets the conditions of Circumstance 2. In this case, based on the proof of Theorem 2 and through calculation, the collective learning outcome $\hat{\theta}$ satisfies:
\begin{equation*}
\begin{aligned}
&\hat{\theta}=\theta_{j_0},
\\&j_0=\underset{j=1,\cdots,m}{\arg\min}\sum\limits_{i=1}^2\pi_i\left((d_i^*-d_i^j)^2+\frac{1}{10}(\alpha_i^*-\alpha_i^j)^2\right),
\end{aligned}
\end{equation*}
where $\boldsymbol{\pi}=(\pi_1,\pi_2)$ is the eigenvector centrality corresponding to the weight matrix.
\begin{figure}[t]
\centering
\subfigure[]{\includegraphics[width=0.45\linewidth]{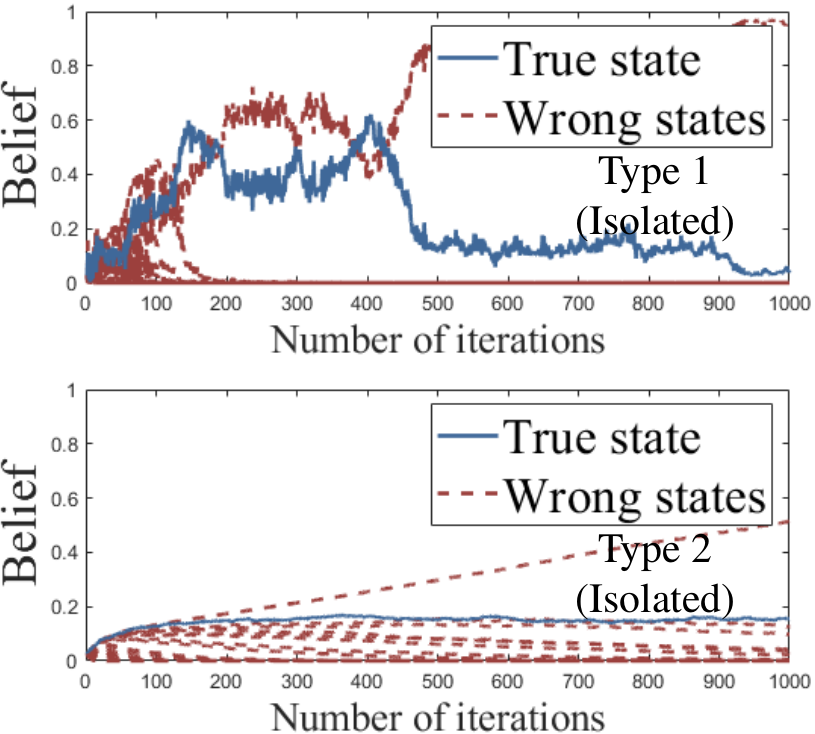}}\hspace{5mm}
\subfigure[]{\includegraphics[width=0.45\linewidth]{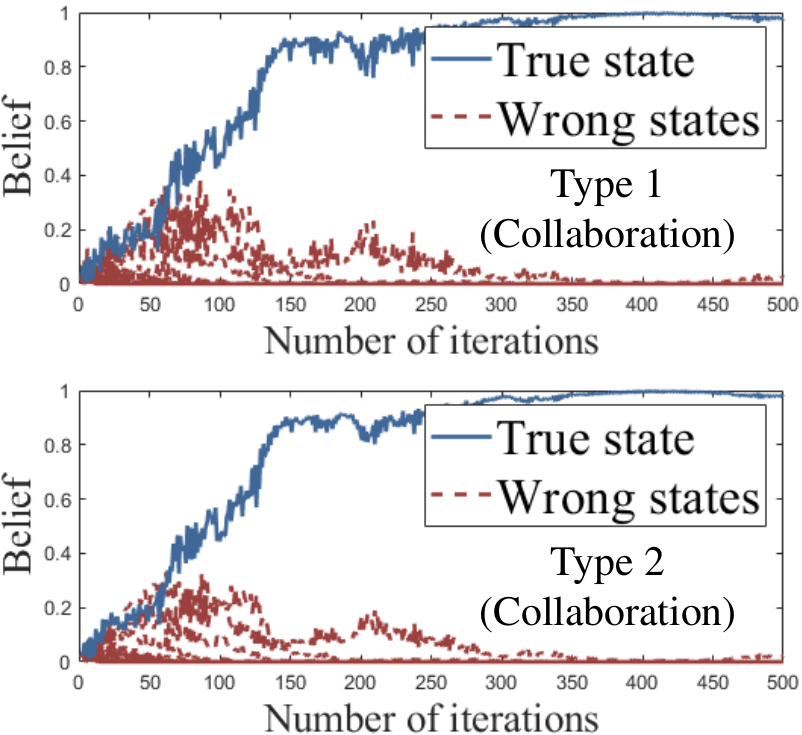}}
\caption{The evolution of beliefs of Agent 1 on all possible states in the second scenario of distributed cooperative localization task. (a) The agents cannot achieve correct learning solely relying on distance or azimuth information. (b) By employing our algorithm to integrate the information from both types of signals, the agents can learn the underlying true state asymptotically.}\label{fig3}
\end{figure}

\begin{figure}[t]
\centering
\includegraphics[width=1\linewidth]{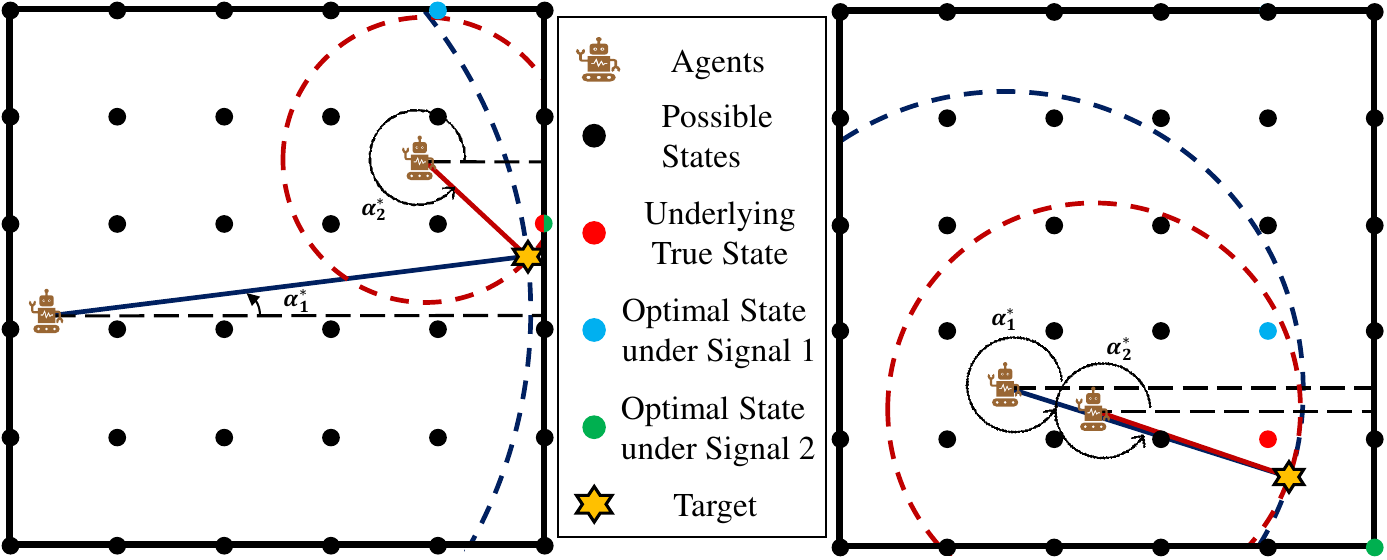}
\caption{Illustrations of two typical scenarios. In the first scenario, the group incorrectly selects the optimal state based on distance information. In the second scenario, due to the similar orientation information of the two agents, no type of signals leads to correct learning.}\label{illustration_2}
\end{figure}
We fix the position of the target and allow agents to be distributed in different locations to study the collaborative learning outcomes with two types of signals separately and jointly. In 1000 experiments, our proposed method, which combines distance and azimuth information, successfully achieves localization 853 times, while relying solely on distance or azimuth information leads to successful localization 550 and 582 times, respectively. This clearly demonstrates the benefits of integrating both types of signal information.

We will further present the detailed results of two typical scenarios. In the first scenario, agents relying solely on distance information fail to find the state closest to the target, while exclusive reliance on azimuth information results in successful localization, as shown in Fig.~\ref{fig2} (a). If they combine both types of information, all agents could successfully achieve the task, as shown in Fig.~\ref{fig2} (b). In the second scenario, agents relying solely on either distance or azimuth information are unable to achieve accurate localization, as evidenced in Fig.~\ref{fig3} (a). However, after applying our method, agents are able to successfully identify the state closest to the target, as demonstrated in Fig.~\ref{fig3} (b). The schematic diagrams of agent and state positions in two scenarios are presented in Fig.~\ref{illustration_2}.

This experiment, serving as a complement to Theorem 2, demonstrates that our proposed multiview observations algorithm can, to a certain extent, alleviate the impact of misleading signals, thereby increasing the fault tolerance of collective learning.

\end {example}

\section{Conclusion and future work}
In this paper, we extend traditional non-Bayesian social learning algorithms designed for single signal and propose a distributed information processing algorithm that integrates information from multiview observations. Our proposed algorithm enables the group to learn from multiple viewpoints of information independently and achieving interaction among multiview observations during the information aggregation step. By introducing weight parameters for various signal types, we not only ensure the convergence of the algorithm under traditional assumptions, but also, in certain scenarios, correct errors introduced by a single view of signal, significantly enhancing the fault tolerance of collective learning.

Our work not only presents a distributed information processing algorithm capable of handling a more diverse range of task scenarios but also contributes to the theoretical foundation of distributed machine learning. In future endeavors, we plan to explore the application of non-Bayesian social learning algorithms with multiview observations in the design of distributed machine learning methods to address challenges associated with multi-feature or high-dimensional problems. Additionally, we may consider other interaction mechanisms among multiple signals, such as negative feedback, and introduce more parameter settings to enable applications in a wider range of tasks.

\bibliographystyle{IEEEtran}
\bibliography{IEEEabrv,main}

\end{document}